\newtheorem{thm}{Theorem}
\newtheorem{cor}{Corollary}
\begin{document}


\begin{center}
 \textbf{\large{\textsc{A Critical Discussion About The Methodology Of Quantum Theory}}}

\vspace*{10mm}

\textbf{M. Ferrero}

\vspace*{1mm}

Dpto. F\'{\i}sica, Universidad de Oviedo\\
33007 Oviedo, Spain\\
maferrero@uniovi.es

\vspace*{2mm}

\textbf{D. Salgado }

\vspace*{1mm}

D.G. Metodolog\'{\i}a y Tecnolog\'{\i}as de la Informaci\'{o}n y las Comunicaciones\\
Instituto Nacional de Estad\'{\i}stica, 28071 Madrid. Spain.\\
david.salgado.fernandez@ine.es

\vspace*{2mm}

\noindent \textbf{J.L. S\'{a}nchez-G\'{o}mez}

\vspace*{1mm}

Dpto. F\'{\i}sica Te\'{o}rica, Universidad Aut\'{o}noma de Madrid\\
28049 Cantoblanco, Madrid, Spain\\
jl.sanchezgomez@uam.es

\vspace*{10mm}

\noindent \textbf{Keywords}: Realism; Individuation; Locality; Determinism; Correlated Quantum Systems; Entanglement Swapping; Operative Methodology; Unconscious Concepts; Material legality; Truth without Correspondence.\\

\vspace*{10mm}

\textbf{Abstract}

\vspace*{3mm}

\begin{minipage}{10cm}
It is argued that the traditional ``realist'' methodology of physics, according to which human concepts, laws and theories can grasp the essence of reality, is incompatible with the most fruitful interpretation of quantum formalism. The proof rests on the violation by quantum mechanics of the foundational principles of that methodology. An alternative methodology, in which the construction of sciences finishes at the level of human experience, as standard quantum theory strongly suggests, is then conjectured.
\end{minipage}

\end{center}

\section{Introduction}

In a recent book, Lee Smolin defends a conception of science that he calls ``realist'', according to which science should inform us how the real world out there, or nature, really is in our \textit{absence} \citep{Smo07a}. The previous sentence does not reflect only a personal point of view. In spite of the radical change that with respect to physical reality quantum theory has produced, this conception is still dominant inside the scientific community and permeates classical physics as well as all the other sciences. This belief is based on different arguments and ideas that Smolin explicitly or implicitly reproduces. The first idea is the introduction of a \textit{set of principles}, which can be derived from ``pure reason'', as \textit{necessary preconditions} to reach that objective. Einstein was the champion between the defenders of this approach. The second idea is that given that the world out there exists independently of our existence, as one of those principles affirms, the individual persons, or the subject, are interchangeable and eventually can be \textit{segregated}. The third idea is that this world can be known with certainty and, as result of our free inquires, we should get a perfect representation of how nature is. The fourth idea is that nature is unified, there is only one world. Therefore, we should be able to find a final and complete theory, a theory of everything, including all we know.\\

This program, well known by both scientists and philosophers of science, has given to science the greatest success. The advance of knowledge based on it in physics, biology, chemistry, etc., was unprecedented in human history. Even today, with minimal but necessary adjustments, the belief in a real world out there and the possibility of knowing it with certainty in its minute details is, for many scientists, the motivation to carry out the difficult but enthralling work of making science.\\

However, in the first twenty five years of the XX century, the experimental facts shown that in the microscopic realm the previous approach was too limited and, to a certain extent, unjustified. This surprising development was not the consequence of a \textit{philosophical whim}. It was slowly but indefatigably imposed on the scientific community by the progressive accumulation of facts, that were impossible to accommodate inside the previous program. Experiments like those by Rubens and Kurlbaum (1900) on the black body spectrum; Lennard's photoelectric effect (1904); Rutherford, Geiger and Marsden (1909), on the nuclear atom; Bohr's model (1913) and the Frank and Hertz (1914) verification of the discrete atomic energy levels; Stern and Gerlach experiment on the existence of the spin (1922); Compton experiments (1923); Davisson and Germer (1926) on the interference of electrons; etc., all this experimental evidence opened a necessary and new manner of looking at nature, which questioned the same basis of the above realist approach. It was then conjectured that in physics there were two different domains of experience, the micro and the macroscopic; that the \textit{a priori} principles did not work in the microscopic domain; that the subject could not be altogether eliminated and that, if there is a world out there, we could not reach and know its essences. The message seemed to be that, at the microscopic level, physics did not speak anymore about \textit{how the world really is in our absence}, but how the world is in a concrete historical period of the human development. That is, how the world is \textit{in our presence}.\\

This essay does not endorse or develop any previous philosophical work. Rather, it seeks to establish a new approach to the old debate between these two alternatives and, for this reason, it neither does refer to recent philosophical literature, nor engages in current discussions on the philosophy of quantum physics.\\

The paper is organized with the following structure. First, we briefly outline the traditional approach to physics by stating explicitly the \textit{a priori} principles considered necessary preconditions for physical thinking. Part 3 is dedicated to analyze the pros and cons of this traditional methodology. In part four, we show how the standard interpretation of non-relativistic quantum mechanical formalism functions in practice. We illustrate its functioning by using an example taken from quantum information theory, namely entanglement swapping. By analyzing swapping, we will prove in part five that quantum formalism does violate the principles previously introduced. Part 6 is a brief remainder of how in the past, the principles which were firmly established manifested a strong resistance to be eliminated. Part 7 introduces an ``operative'' methodology based not on abstract philosophical analysis, but on the practical use of the theory shown in the previous discussion. This methodology includes an idea of truth without correspondence.\\

\section{Einstein's Critical Realism As The ``Traditional'' Methodology Of Physics}

It could reasonably be maintained that in the \textit{scientific practice} and as far as the last objectives of the scientific inquiry were concerned, and until approximately 86 years ago, there were only ``one methodology of science''\footnote{This is a coarse mode of presenting the problem. The stated methodology is a \emph{general synthesis} of different methodologies of science that have been profusely discussed by philosophers in the last two centuries. We are conscious of this defect, but in this schematic characterization, we adopt the point of view of the majority of the experimental physicists' community.}. This traditional methodology, that even today constitutes the epistemological backbone of many research programs, could be schematically characterized in the following way. It begins by postulating that there is out there a material world, which justifies why we all see the same things \citep{Gar83a}. Then, it adds that men are a very accidental and later descendent of primates. For this very reason, reality is there before us and it cannot depend on our existence. And it concludes by stating that the objective of physics, sciences in general, must be to inform us about how this material world really is in our absence, \textit{as if we were not here}. Concretely, physics has to provide us explanations of why things happen; that is, an understanding and even a spatiotemporal image, a representation, of what is going on in its minute details. More arrogant and ambitious programs add that nature is only one, and, therefore, we should be able to find only one final theory from which everything could be deduced, perhaps men included \citep{Wei93a}. That means that if we were clever enough, and we had sufficient computational power, we could calculate how ``things'' would behave as a consequence of the properties of its last constituents, today quarks and leptons (\textit{reductionism})\footnote{We accept the reductionist program as a start point, but not the arrogant part of the program. Leaving apart knickknacks like life, language, consciousness, etc, impossible to reduce and hence \emph{emergent}, the situation in physics is that the standard model of particles has about 20 free constants and the cosmological one about 15, including dark energy and matter. Each of these constants represents something that we ignore. The state of the art today is that nobody knows, and no theory explains, why they have the values they have. In fact, going beyond the Standard Model to the, at present, deepest formulation of string theory, the so-called String/M theory, there is no question as to whether such constants are not actually ``fundamental'', in the sense that their values could not be determined from first principles, similarly as the Kepler radii in the solar system cannot be calculated in Newton´s (or Einstein´s) theory of gravitation. They play the role of initial conditions lying outside the fundamental laws. }. Last, by not least, it is supposed that the subjects are free to choose the experiments they want to carry out.\\

This approach to physics contains a set of principles that being \textit{a priori}, or deducible from pure reason, to use Kant's terminology, are considered as necessary preconditions to make physics. As it is well known and Einstein liked to insist, without any of them, physical thinking in the familiar sense would not be possible \citep{Ein71a}. These principles are: the principle of \emph{realism} (already introduced above); \emph{causality}; the \emph{continuous time flow}; \emph{individuation}; and \emph{locality}.\\

It is not easy to know when and how these principles were first intuited. The most plausible hypothesis is that they had survival advantages for humankind. No matter how it was, those principles crystallized as a final consequence of the XVI century scientific revolution and its influence remain today with us. Although well known, we will briefly introduce them for completion\footnote{Both, classical physics and relativity theory satisfy all of them. Time, whatever it would be, has the same role in classical and quantum physics, so we will leave it out of our discussion.}.\\

As far as the \emph{principle of realism} is concerned, it is conceptually useful to make a distinction between its meaning in the ``broad sense'' and in the ``restricted sense''. This distinction is irrelevant in classical physics, but, as we will see, is pertinent in quantum physics that is realist in the broad sense, but not in the restricted sense.\\

In the ``broad sense'', this principle postulates the existence of a material reality out there and independent of us. In the ``restricted sense'' it affirms that the world is composed by separable (see below) objects that have properties with well defined values. A measurement of the property, correctly carried out with a suitably calibrated device, will reveal the pre-existing value of the property. In the field of research called Foundations of Quantum Physics \citep{Fre04a}, these properties with well defined values are called, for historical reasons in which we cannot enter now, EPR elements of reality \citep{EinPodRos35a}.\\

The \emph{individuation} principle states that things like particles, fields, etc., that have a real existence independent of the perceiving subject, are arranged in a space-time continuum.  At a certain time, and provided these objects are far apart in the space, they may claim an \textit{independent existence} of one another. This same idea is sometimes stated by saying that the objects in question are separable.\\

The \emph{locality} principle is interwoven with the previous ones. It postulates that if we have two objects A and B spatially separated, in the relativity theory sense, any external influence over A has no direct influence on B. It could be also stated by saying that actions-at-a-distance are not allowed in physics.\\

Once more, it is conceptually useful to distinguish between locality in the broad and in the restricted sense. In the broad sense, is the principle that we have just stated above. In the restricted sense, it declares that \textit{even if these direct influences exist}, they cannot be utilized to send information with superluminal velocity. In this form, it is sometimes called the \emph{non-signaling} condition and, as it is very well known, no conflict between it and relativity theory arises: even if the spatially separated particle B ``\textit{instantaneously feels something}'', this subtle influence cannot be used to send messages. A theory could be local in the restricted sense and not in the broad sense, as it is again the case of quantum theory.\\

Although separability (individuation) and locality are closely related (they coincide in classical physics), they are in general different principles. There exist states in quantum theory that are local but no-separable, as for example some Werner states \citep{Wer89a}.\\

It is usually understood that there is \emph{causality} if every event has a set of antecedent circumstances from which the event follows according to a rule. This statement has two different parts, namely: the initial conditions, to be measured, and the physical law, given by a differential equation. Using this definition, as these equations have a unique solution, the same initial conditions, the same causes, will produce always the same effects\footnote{Bohr, however, used to relate this principle to the conservation principles \citep{Boh58a,Boh63a}.}.  In classical physics, the rule will be given by the equations of motion of Newton, Lagrange or Hamilton, and in non relativistic quantum physics, by Schrödinger equation. This implies, as Laplace noted, that if we knew the initial conditions (position and momentum) of all the particles in the universe, we would know the future for all time. As this definition uses a time sequence and we have left time out of our discussion, in what follows we will understand causality as the closely related principle that declares that \textit{the effects are uniquely and completely determined by its causes}. In physics, this view of nature is called \emph{determinism\footnote{We are forgetting now the chaotic cases. Chaos occurs when a system depends in a sensitive way on its previous state. We will leave this problem outside of our discussion. Nonetheless, it should be pointed out that chaos implies indeterminism only in a practical sense, but not in a strictly fundamental one, and this makes it different from quantum indeterminism, which is really fundamental.}}.\\

A few comments are now in order. The first one is that, although we had said that those principles constitute the epistemological backbone of sciences, classical physics in particular, Newtonian gravitation was non-local in both senses, contradicting our arguments. Yet, it is also true that since its introduction, Newton perceived it as to be simply unintelligible, as a provisional expedient to be eliminated from physics. He even thought about different possibilities to avoid the actions at a distance. In a letter to Boyle and in query 21 of the Optics he speculated, for example, about the existence of a substance, called ether, that would penetrated all matter as a medium for the gravitational interactions.\\

The second has to do with an apparent contradiction related to determinism. Both, Thermodynamics and Statistical Mechanics are at variance with it. However, it is generally accepted that this is not a question of principle, but a practical one.\\

The last comment refers to the old argument that in a total deterministic, Laplacian world there is no way for free will. We would have only the \textit{illusion of being free} to choose. This might be a shared but incorrect conclusion. We cannot enter into this discussion here but Squires, for example, maintains that to have free will, determinism would be a necessary condition \citep{Squ90a}.

\section{Pros And Cons Of The Traditional Methodology Of Science}

The traditional manner of doing science outlined above as a general synthesis is based on the validity of the four principles stated. Our thesis is that only if these four principles, in the restricted and broad sense in its case, work, then it make sense to freely study how nature would be in our absence. This is the mechanism that allows us to think that there is a world out there compound by a plurality of individual objects, which have an independent existence, whose properties have pre-existing values that will be revealed by our operations. From this approach, the external world has, thus, an \textit{immutable legality} to be discovered and reflected in laws. In the old days, these laws were conferred by God. Today it is believed that they are ``inherent'' in the Big Bang, from the beginning of time, and we, humans, are \textit{detached} passive observers that have nothing to do with this legality. Being these laws established from the beginning and forever, our role as outside observers could only be to discover and to reflect them in our theories. If we discover bad laws, as in the phlogiston case, we are doing spurious science. If we discover the good ones, we are doing genuine science. From this point of view, the scientific method consist in gradually uncover the real essence of the phenomena, the objective reality, the ``thing in itself''. This is the most plausible meaning of ``in our \textit{absence}''.\\

As we have already noted, the success of this method until the beginning of the XX century is indubitable. Classical physics worked it out to perfection. For the long period between the scientific revolution and the XX century, we were able to remove our interventions into nature completely and to explain the phenomena as if we were seeing ``the truth'', as if we were detached observers of a structured and finished reality that is just out there in front of us.\\

This same success is the reason why this research program is so attractive that still today the great majority of the scientific community firmly believes that his task is to find out how nature really is. As we have already mentioned in the introduction, it explains with simplicity why we all see the same things. The idea of truth as a fair reflection, (alternatively, as a correspondence with), of what really is. The partial utility of superseded theories; or even why we have little control over which theories seem to be ``temporally'' correct. Yet, it is the origin of the true impulse behind our individual research efforts. However, it also has weaknesses. Let us consider some of them\footnote{We do not think we understand these epistemic-ontological problems better than philosophers do. Many philosophers have been engaged in this type of reflection throughout their entire life. We assume, therefore, that some of them may find the discussion that follows (and the one in pages \pageref{p20} to \pageref{p23}) hopelessly naive.}.\\

The first one is the following. The mentioned success has also given us the greatest confidence that at the macroscopic level, objects are what they look like. But are they really? If that were the case it would be difficult to understand why Newton was unable to discover the electromagnetic fields surrounding him; or why a skilled man like Galileo didn't discover the black holes; or W. Thomson, the dark energy; etc. How can we understand hundreds of facts like those if we begin by thinking that we are detached observers of a structured physical reality that lies just in front of us finished in its minute details? Why are we not able to grasp it as it is, to ``discover'' it in a definite way, once and forever? A clear explanation of the permanent imperfection of our grasping and of the \textit{elusiveness of the ``essences}'' will be much welcome. Note that this sounds also as \textit{a great defeat}: for at least the last 200 years generation after generation of young and clever researches have been unable to see what lies just in front of their eyes.\\

The second is the following. Imagine that we go back in time. We do this kind of exercise to draw conclusions in cosmology, or in the theory of evolution, etc. If we go back 110 years, then quantum and relativity theories disappear. 110 years more, and statistical mechanics, electromagnetism, etc., also disappear. Another 110 years back and even Newtonian classical mechanics disappear. If we follow with this exercise and go even back, not only laws and theories disappear. It is doubtful that 3.500 years ago we had \textit{conscious minds} \citep{Sne82a,Jay00a}; that 10.000 years ago we had \textit{writing}; and it seems that if we go back another thousands years more, even \textit{language disappears}. Is it sound to think about the existence of laws and theories without conscious minds, writing and language? Is it an unjustified belief? Or, is only a \textit{metaphor}?\\

The third one is the following. Let us accept for a moment that ``natural laws'' are ``\textit{laws of nature}''. In that case, they can be only a ``partial set of laws''. It is an undeniable fact that the more important advances in sciences have arisen not through the study of phenomena as they occur in nature, but to the study of phenomena produced by technological means. This is, for example, how historically quantum physics was build up. Its principles were introduced thanks to the previous technological development, and therefore, in our \textit{presence}. They changed completely our conception of how the world is and, nowadays, quantum principles are utilized to conjecture how the world could be\textbf{ }at the beginning, that is, in our absence. Is not the ``traditional'' realist methodology surreptitiously reintroducing the subject\textbf{ }at this point?\\

The fourth one is the \textit{non-trivial change of laws and theories}, implicit in the previous comments. As this has been extensively studied by philosophers of science, we will give only the example of Kepler's laws to illustrate what we mean. The non-trivial changes from Newton dynamics to relativity theory or from classical mechanics to quantum mechanics are also well known examples to argue with. The question is: do Kepler's laws uncover the real essence of the phenomena? From the point of view of the realist methodology, the answer should be yes. However, we know for sure that they work only for short times and in the two-body approximation; that the ellipses are given in a plane, while the heavenly bodies move in a three-dimensional space, etc. To make a long story short: History of physics shows that with the unique exception of current laws and theories, all previous theories have been proved to be limited in one way or another and have become surpassed by the new knowledge and the new order introduced by us. It is what E. Wigner once called the ephemeral nature of physical theories (Wigner, 1967). These changes of laws and theories have also been accompanied by the corresponding change in our descriptions, our images, our explanations and the ``truth'' of what really was going on.\\

These non trivial changes of theories are difficult to swallow. If the laws really were ``laws of nature'' describing the essence of things: why are they so \textit{ephemeral, poor and changeable}? It seems to us that implicit in this methodology is the unexplained idea that in any concrete moment of the human development there are ``parts of material reality'' that are out of our reach, and that we progressively increase our knowledge incorporating new parts. Why it is so? Where do these parts come from?\\

The fifth and more serious difficulty for us to accept this point of view is standard quantum theory. The methodology we are now trying to weight up is implicitly and explicitly based on the four principles stated in part 2: realism in the broad and restricted senses; sufficient reason; individuation; and locality in the broad and restricted senses. Einstein was right in this respect: if we want to find out how nature really is in our absence, then those principles must be satisfied. The problem is that, as we will demonstrate in the next paragraph, they are incompatible with the most productive interpretation of our best theory ever. Hence, the methodology based on them cannot be a satisfactory methodology.\\

\section{A Succint Version Of Entanglement Swapping}

\textit{Entanglement swapping} (ES in what follows) is a phenomenon in which the entanglement between two particles is \textit{teleported} to other distant particles by performing suitable joint measurements and broadcasting the results as classical information allowing so a distant observer to make the appropriate selection. The protocol were proposed in 1993 \citep{ZukZeiHorEke93a} and experimentally verified in 2002 \citep{JenWeiPanZei02a}, using the standard interpretation of quantum formalism (see, for example, \citep{CohDiuLal77a}).\\

Imagine that two sources A and B, separated at a cosmological distance, prepare pairs of particles represented by a singlet state: ${\left| \Psi ^{-}  \right\rangle} _{{\rm i},{\rm j}} {\rm \; =\; \; }\frac{1}{\sqrt{2} } {\rm \; }\left[{\left| 01 \right\rangle} _{{\rm i},{\rm j}} {\rm \; -\; }{\left| 10 \right\rangle} _{{\rm i,j}} \right]$, and that the source A emits particles 1 and 2, and the source B does the same at the same time with particles 3 and 4. The sources also emit one pair per minute. The figure \ref{Fig} shows schematically the hypothetical experiment.

\begin{figure}[h!]
\includegraphics*[width=4.55in, height=2.57in]{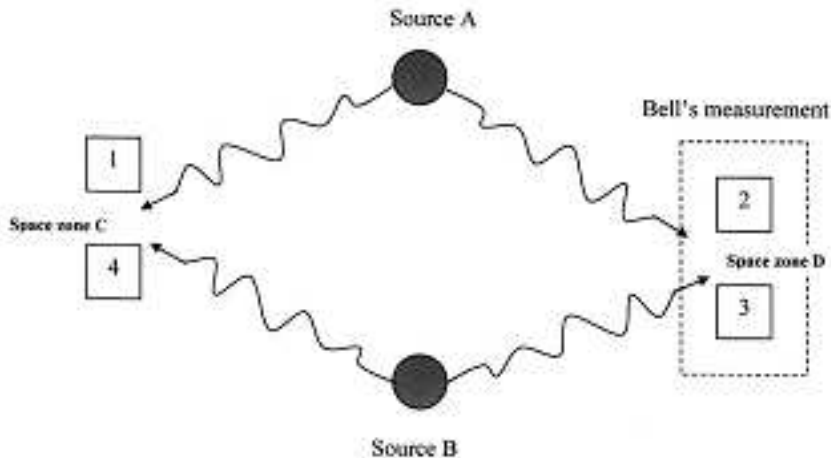}
\caption{\label{Fig}Schematic picture of entanglement swapping.}
\end{figure}

\noindent As the description of the experiment made manifest, particles 1 and 4 have not interacted in the past and therefore \textit{in principle} they have nothing to do with each other. However, 1 is strongly anticorrelated with 2, and similarly 3 with 4.\\

\noindent The joint state of the four particles is given by:
\begin{eqnarray}
{\left| \Psi {\rm \; } \right\rangle} _{1,2,3,4} &=&\frac{1}{2} {\rm \; }\left[({\left| 01 \right\rangle} {\rm \; -\; }{\left| 10 \right\rangle} )_{1,2} \otimes {\rm \; (}{\left| 01 \right\rangle} {\rm \; -\; }{\left| 10 \right\rangle} )_{3,4} \right]   \nonumber\\
&=& {\rm \; }\frac{1}{2} {\rm \; }\left[{\left| 0101 \right\rangle} {\rm \; -\; }{\left| 0110 \right\rangle} {\rm \; -\; }{\left| 1001 \right\rangle} {\rm \; +\; }{\left| 1010 \right\rangle} \right]_{1,2,3,4}                                  (1).\label{JointState}
\end{eqnarray}
This joint state ${\left| \psi  \right\rangle} _{1,2,3,4} $contains complete information about all possible results of measurements carried out upon any part of the four system particles. By this we meant that if you determine the state of any part of the system, the other part would be left in the \textit{relative state} corresponding to the one you have found, as expressed by \eqref{JointState}. For example, if you measure only the spin of the particle 1 in the direction OZ and you get the result${\left| 0 \right\rangle} $, then the particles 2, 3 and 4 will be found, in the corresponding joint measurement, in one of the states ${\left| 101 \right\rangle} _{2,3,4} $or ${\left| 110 \right\rangle} _{2,3,4} $with equal probability. You will never find ${\left| 111 \right\rangle} _{2,3,4} $or ${\left| 000 \right\rangle} _{2,3,4} $, let us say.\\

If now you make instead the measurement $\hat{\sigma }_{z2} \otimes \hat{\sigma }_{z3} $ upon the pair 2 and 3 in zone D of space, the particles 1 and 4, which might be \textit{spatially separated}, ``instantaneously'' become \textit{correlated at-a-distance in the direction OZ.} For example, imagine that you get in your measurement the state ${\left| 00 \right\rangle} _{2,3} $. If someone carries out in that very instant and in zone C of the space the measurement $\hat{\sigma }_{z,1} \otimes \hat{\sigma }_{z,4} $, she will find with certainty that both spins 1 and 4 are down, that is, she will find the relative state\footnote{This is evident if you write down the state \eqref{JointState} in the  following form: \begin{equation}\label{JointState2}{\left| \Psi {\rm \; } \right\rangle} _{1,2,3,4} {\rm =\; }\frac{1}{2} \left[{\left| 01 \right\rangle} _{1,4} \otimes {\left| 10 \right\rangle} _{2,3} -{\left| 00 \right\rangle} _{1,4} \otimes {\left| 11 \right\rangle} _{2,3} -{\left| 11 \right\rangle} _{1,4} \otimes {\left| 00 \right\rangle} _{2,3} {\rm +}{\left| 10 \right\rangle} _{1,4} \otimes {\left| 01 \right\rangle} _{2,3} \right].\end{equation} } ${\left| 11 \right\rangle} _{1,4} $. This \textit{``formal''} prediction can be tested. To this end, observer in zone D must send to observer in zone C information about her results. Otherwise, the observer in zone C, will have no idea neither about the direction, nor upon which concrete subsystems she has to measure.\\

If we carry out a different measurement upon particles 2 and 3, for example a Bell measurement, particles 1 and 4 become now \textit{maximally correlated in any direction\footnote{Again, this is trivial if you write \eqref{JointState2} above in the following form: \begin{equation}\label{JointState3}{\left| \Psi  \right\rangle} _{1,2,3,4} =\frac{1}{2} \left[{\left| \Psi ^{+}  \right\rangle} _{1,4} \otimes {\left| \Psi ^{+}  \right\rangle} _{2,3} -{\rm \; }{\left| \Psi ^{{\rm -}}  \right\rangle} _{1,4} \otimes {\rm \; }{\left| \Psi ^{{\rm -}}  \right\rangle} _{2,3} -{\rm \; }{\left| \Phi ^{+}  \right\rangle} _{1,4} \otimes {\left| \Phi ^{+}  \right\rangle} _{2,3} +{\left| \Phi ^{-}  \right\rangle} _{1,4} \otimes {\left| \Phi ^{-}  \right\rangle} _{2,3} \right]\end{equation} \noindent where  ${\left| \Psi ^{\pm }  \right\rangle}$ and ${\left| \Phi ^{\pm }  \right\rangle}$  are the four Bell states:  ${\left| \Psi ^{\pm }  \right\rangle} = {\rm \; }\frac{1}{\sqrt{2} } {\rm \; }\left[{\left| 01 \right\rangle} {\rm \; }\pm {\rm \; }{\left| 10 \right\rangle} \right]$ and ${\rm \; }{\left| \Phi ^{\pm }  \right\rangle} {\rm \; =\; }\frac{1}{\sqrt{2} } {\rm \; }\left[{\left| 00 \right\rangle} {\rm \; }\pm {\rm \; }{\left| 11 \right\rangle} \right].$}}. The correctness of this prediction can be verified experimentally by testing the CHSH inequality \citep{ClaHorShiHol69a}. If we use pairs (1+4) that correspond to particles 2 and 3 not measured, the inequality would be satisfied. However, if we broadcast the results and, using this information, someone selects in zone C of the space the sub ensemble of pairs (1+4) that corresponds to pairs 2 and 3 that have been submitted to a Bell measurement with the result ${\left| \Psi ^{-}  \right\rangle} _{2,3} {\rm \; }$, for example, then the inequality would be violated. Pairs (1+4) have become now \textit{entangled at a distance}. This post-selection procedure is the one used to test the prediction. We should not underestimate the importance of this protocol. ES is predicted to be a basic element in quantum repeaters in future information technologies.\\

What is new and surprising in the previous description is that, before this protocol was introduced, it was thought that entanglement needed the interaction between two or more particles. Nonetheless, the formalism predicts that particles can be entangled without \textit{direct} interaction. As we have explained, all the information is contained in the whole, the state ${\left| \psi  \right\rangle} _{1,2,3,4} $. This information \textit{combined }with the new knowledge obtained by measuring in zone D, determines the relative state of any sub ensemble in zone C, and their internal correlations. Therefore, we conclude that provided this information is transferred, the\textit{ type of correlation} of each individual subsystem (1+4) is established instantaneously at-a-distance by measuring over the subsystem (2+3) and, hence, that the pure state that describes afterwards each individual pair (1+4) is different depending on the kind of measurement and the result obtained in a spatially separated region of the space.\\

A question now arises, namely: what is the ``cause'' of entanglement? The following inferences could be made in this respect. If there were no selection, then the final state of the ensemble made up by all the particles (1+4) corresponding to pairs (2+3) that have been submitted to a Bell measurement would be:

\begin{equation}\label{JointState4}
\rho _{{\rm 1,4}} {\rm \; }={\rm \; tr}_{{\rm 2,3}} {\rm \; }\rho _{{\rm 1,2,3,4}} {\rm \; }={\rm \; }\frac{{\rm 1}}{{\rm 4}} {\rm \; }\hat{{\rm I}},
\end{equation}

\noindent where $\hat{{\rm I}}$ is the unity matrix in Hilbert space of dimension 4. This state associated to particles (1+4) is \textit{separable}. This implies that we are not going to have a violation of any Bell inequality \citep{Bel87a} using pairs in this state. The fact that it be possible to choose a sub ensemble of pairs (1+4) that violates Bell's inequality, manifests that pairs that have never met before are indeed entangled \textit{after} carrying out measurements at a spatially separated region. The broadcasting of the results (classical information) and the posterior selection of pairs (1+4) based on them are therefore \textit{necessary conditions\footnote{This information is, in our opinion, \emph{the relevant element} of the famous sentence that Bohr wrote in 1935 answering the EPR paper: ``But even at this stage, there is essentially the question of \emph{an influence on the very conditions which define the possible types of predictions regarding the future behaviour of the system}'' (emphasis by Bohr, in \citet{Ein70a}, p.\ 234).}} for the swapping protocol to be realized, but they are not \textit{sufficient}. Let us argue this point more carefully. The issue to address is this: is it the transfer of classical information and the posterior selection the ``cause'' of the entanglement we are looking for, or can we make the (im)plausible hypothesis that the pairs (1+4) were entangled ``instantaneously'' by the successive Bell measurements? We know for sure that if no Bell measurements were carried out upon particles 2 and 3, no entangled pairs (1+4) at all could be selected. We also know that if the Bell measurements are carried out but no information is broadcast, and hence no selection is done or is done by chance, no violation of CHSH inequality would be produced. Therefore, we conclude that the broadcasting of information plus post selection are necessary conditions to entangle the pairs but that it is only \textit{their conjunction} with the Bell measurement what makes all three sufficient for ES. Can we go beyond and infer that somehow the entanglement between particles (1+4) is produced by the \textit{physical interaction} between particles 2 and 3 brought about by the measurement upon them? We do not have a clear answer but, if we look at the expression \eqref{JointState3}, we see that the possibility that they were instantaneously entangled by those measurements is included in the initial state ${\left| \psi  \right\rangle} _{1,2,3,4} $. This state predicts that if particles 2 and 3 interact in the appropriate way, the effect of this interaction is somehow \textit{transferred }to the subsystem (1+4). This inference is reinforced by the fact that if the pairs were really entangled one by one with each Bell measurement, the density matrix $\hat{\rho }_{1,4} $ representing the state of all that pairs, would be an equal mixture of the four Bell sates, as expression \eqref{JointState3} predicts. Now if you make this simple calculation you will get expression \eqref{JointState4}, supporting our previous argument. If this inference sounds peculiar, as may well happen, the conceptual ``interference'' would come, once again, from common sense. We are accustomed to think that, for example, money plus money gives more money, but here we see that this ``true'' is not favored by quantum formalism: entanglement plus entanglement might give a separable state, that is, no entanglement at all. Another quantum surprise!\\

We reach therefore, the following conclusions from our analysis\footnote{The account we have given should be sufficient to understand why this protocol cannot be exploited to send signals at superluminal velocity.},

\begin{itemize}
\item[1${}^{st}$]: that the relative state of particles (1+4), and so the correlations to be uncovered between them, depend on the measurement and \textit{choice} made at a spatially separated zone of the space. And,\label{p13}

\item[2${}^{nd}$]: that the measurement does not ``reveal'' a particular set of \textit{previous} correlations. The correlations between particles 1 and 4 do not have their origin in an interaction in the past. They must, accordingly, be ``created\textbf{''} in the measurement. However, if they are created in the measurement, then one of two things: either there is not ``enough reality'' to account for what occurs \citep{Lap07a} or material reality is partially undefined. In any of these two cases, it is obvious that we are not ``getting closer'' or ``grasping more fully'' or establishing an impossible ``one-to-one correspondence'' with something that still does not exist.
\end{itemize}
\section{The Breaking Of Classical Principles (and Einstein's Critical Realism)}

We briefly explain now how the classical principles introduced in part 2 are violated in this concrete example. \emph{Realism in the restricted sense} is violated because the individual particles 1, 2, 3 and 4, do not have spins with a well defined value in any direction, a pre-existing value that the measurement would reveal. The sense of the spin in any direction is ``defined'' (created) in the same measurement, so EPR elements of reality do not pre-exist \citep{Mer93a}. As just explained above, the same can be predicated about the correlations: they are \textit{not new} EPR elements of reality either.\\

\emph{Determinism} is violated because the effect is not completely and uniquely determined by its causes. The technician making Bell measurements in zone D and knowing the joint pure state \eqref{JointState3} will get, \textit{for exactly the same initial conditions}, the result ${\left| \Psi ^{+}  \right\rangle} _{{\rm 2,3}} $ in 25\% of the cases, the result ${\left| \Psi ^{-}  \right\rangle} _{2,3} $ in 25\% of the cases, and so on. There is nothing she can do to improve or modify the statistic. Quantum theory is an essentially probabilistic theory, whatever ``essential'' might signify here. The individual results occur without deterministic law. This is crucial, because, as we have already emphasized, it allows the compatibility with relativity theory: signaling is impossible using quantum formalism.\\

The \emph{locality principle in the broad sense} is violated because the state of the particles 1 and 4 can be prepared at a distance by a measurement upon particles 2 and 3 in the region D.\\

Finally, the \emph{individuation principle} is violated because in this phenomenon \textit{is not possible to separate }the state of particles (1+4) from the\textit{ interaction }of particles (2+3) with the \textit{measuring device.} That is, at that precise moment, particles 1 and 4 ``feel something'' that entangles them. As previously discuss, we must confess that we do not have a concrete proposal to understand what this ``telepathic coupling'', to use Einstein words \citep{Jam74a}, p. 494, could be.\\

We have utilized entanglement swapping to demonstrate the violation of these four principles. This is only one example. As a matter of fact, any scholar working in the field knows that these experimental violations are contained in the principles postulated by quantum theory. In our opinion, the so-called ``weirdness'' of quantum theory is related to this violation of the principles of realism in the restricted sense, determinism, locality in the broad sense and individuation.\\

The distinction we have made between broad and restricted sense in two of the principles, has been introduced to help to eliminate some confusion about the nature of the quantum principles. For many researchers, quantum theory is both realist and local, while for many others it is, neither realist, nor local. There is really no contradiction at all. They might be even saying the same thing. If we take realism in the restricted sense, quantum theory is not realist, but if we take it in the broad sense it is indeed realist. Something similar could be predicated about locality. If we take it in the broad sense, quantum theory is non-local. However, if we take it in the restricted sense, it is local. Another form of expressing the same thing would be to say that quantum theory is a realist theory, but the EPR elements of reality do not exist, and that it is non-local but this non-locality does not allow sending information at superluminal velocity. This summarizes, and we hope, clarifies, the contradictory opinions that can be read in the foundations of quantum physics field.\\

There are some variants of ``realist interpretations'' and alternative formalism, like Bohm's theory \citep{Boh52a}; dynamical collapse theories \citep{GhiRimWeb86a}; transactional interpretation of quantum mechanics \citep{Cra86a}; etc., that uphold some of the previous principles, but not all of them. The more they uphold, the more expensive is the price they have to pay. It is not the objective of this paper to discuss these alternatives and to analyze, for example, how they understand entanglement swapping. We see them as \textit{defensive strategies} that philosophically look to the past, not to the future. In fact, they have been always \textit{behind} the developments of standard quantum theory, like for example in the case of quantum information, and in the last four or five decades they have been unable to produce anything fundamentally new. This is the reason why, rather than looking back and try to defend how the world is in our absence, our objective will be to disclose \textit{how the world could be if the standard interpretation of quantum formalism is considered correct}.\\

\section{The Resistance Of Established Principles To Be Relegated}

Limiting our arguments to physics, the ``most fundamental'' of all sciences in the uncritical reductionist program, it is an historical fact that in the past, whenever some principles showed their limitation to encompass new empirical evidence and new principles were introduced to do it, a strong resistance to abandon the old principles and to embrace the new ones emerged. This \textit{conservative attitude} should be applied to the new principles, to allow them to show their potentialities, but it is usually doomed to failure with the old ones.\\

The arguments used today to reject standard quantum formalism can be summarized by saying that, to the detractors, the new quantum principles are regarded as doubtful, confusing, obscure and even absurd\footnote{ This is again a crude exposition. A detailed analysis would take us far away from our main objectives.}. Even if the theory is coherent and in perfect agreement with all the observed facts, as it is the case, its ``true meaning'' is disputed. Besides, part of the empirical evidence that supports it, is carefully scrutinized to find loopholes. Good critical work is always welcome \citep{San09a}. However, the main objective of some of these maneuvers seems to maintain safe the old established principles and the common sense related to them. In these last cases, the acceptance of the new theory does not depend on its merits, but upon whether it is compatible with the established principles or not. If it is not, as it is the case with quantum theory, then it cannot be the ``truth'', and must be completed, revised or rejected, as in the alternative proposals quoted above. This is nothing new. On the contrary, it seems to have been the normal rule in the history of physics. Let us remember briefly the well known examples of Copernicus and Newton.\\

Despite its agreement with the observed phenomena, the Copernicus system was strongly criticized on the basis that its principles were apparently absurd. This was the shared opinion not only inside the Catholic Church, but also by scholars like Francis Bacon, one of the pioneers of the empirical science. For Bacon, Copernicus' principles were ``the speculations of one who cares not what fictions he introduces into nature, provided his calculations answer'' \citep{Bac53a}. One century later, Newton's theory got a similar criticism. Not only from Berkeley. Leibniz, a practicing scientist who made important contributions to mathematics and physics, considered both, the inertia principle and the gravitation law as absurd and even false. In both cases, Copernicus and Newton, the real reason behind the criticism and the rejection was \textit{uniquely} the contradiction between the new principles introduced by their respective theories and the Aristotelian well established principles.\\

If we were interested in learning the lesson from the past, it would be a good exercise to have a look at what happened afterwards. Two things are interesting in this respect. In the following decades, the range of physical phenomena that were in perfect agreement with the new principles increased dramatically. On the other hand, those new principles allowed new theoretical derivations that gave a more profound understanding of the theory. These two facts changed slowly but inexorably the attitude of the next generations, to the paradoxical point that, after another century, and until today, the principles introduced by Copernicus and Newton were considered as ``self evident principles''. Let us make only two quotes. In the \textit{Critique of Pure Reason}, Kant declared that objective empirical knowledge could only be achieved if individual laws were formulated according to Newton's principles. And half a century later, Helmholtz wrote: ``The task of physical science is finally to reduce all phenomena of nature to forces of attraction and repulsion, the intensity of which is dependent only upon the mutual distance of material bodies. Only if this problem is solved, we can be sure that nature is conceivable'' (quoted from \citet{Whi50a}).\\

This brief historical, and to some extent slanted, reminder shows something that we all know: established principles, laws and theories have always been displaced by the new ones, and in many cases the new ideas were strongly criticized because they violated the previous ones. It is a fact that physical theories have always been changed when our interventions into nature establish new relations that are neither predicted, nor embraced by the current theory in any convenient way. If we consider that there are philosophical principles that can be derived from pure reason, it is evident that they can never be modified. But this is not what has happened.\label{p17}\\

We should not underestimate this lesson from the past. The argument by Einstein that the principles of realism, individuation, determinism and locality are necessary ``preconditions'' to postulate laws which can be checked empirically \citep{Ein71a}, is similar to the arguments made by Berkeley (and by Kant). There is only a change in the principles. When people claim that quantum theory is a valuable description of observed facts that does not give us neither a ``real understanding'', not a description of what is ``really'' going on, an ontology of physical reality, they are reproducing Bacon attacks. The real meaning of all this criticism is that the principles of quantum physics, the formalism compacted in four or five postulates, are in contradiction with those ``evident and well established'' previous principles.\\

\section{A Critical Discussion About The Methodology Of Quantum Physics}

We have seen that quantum theory strongly suggests that the traditional methodology is inadequate at the microscopic level, thus invalidating its role as a general methodology. The alternative is then: either non relativistic quantum theory in its standard version is wrong, or the traditional methodology of science based on those principles is misleading. In what follows we will take the second alternative and we will try to infer how the world could be like if the standard interpretation of quantum theory were true.\\

To this end we will adopt the strategy that the traditional methodology does in fact work at the macroscopic level but not in the microscopic one. We will keep then realism in the broad sense, but we will reject it in the restricted sense. The immediate ontological outcome of this decision is to put aside the EPR elements of reality. However, if the systems do not have preexistent defined values for a concrete property, let us say the spin, then the values must be \textit{created} in the same act of measurement. The consequences are twofold.\\

On the one hand, we cannot maintain the program of how the world really is in our absence. If the kind of correlation in ES or the EPR elements of reality are \textit{created} by measurement, \textit{our presence }becomes essential. The change is then dramatic: we pass from being in a cool and alien universe, in which we are mere descriptive spectators, to a \textit{participative }one \citep{Whe96a}, one in which our free decisions count and therefore the subject cannot be altogether eliminated.\\

On the other hand, the material world does not manifest to us as something finished in its minute details, but as something \textit{partially undefined}. Remember that quantum entities can behave like particles or waves, depending on \textit{our choice}, on how \textit{we} \textit{decide} to observe them. This indefiniteness evokes us Dirac's statement about material reality. According to Dirac, the fundamental laws of nature, as stated by non relativistic quantum theory, do not govern the world as it appears in our mental picture in any direct way, but they control ``a substratum of which we cannot form a mental picture without introducing irrelevancies'' \citep{Dir30a}.\\

\label{p18}

It is convenient at this point to make a distinction between \textit{material reality} and\emph{ }\textit{physical reality}. Material reality, or matter is by definition the undefined Dirac substratum, and physical reality is a \textit{social construction} that we build up based on our current laws and theories. Let us put the stale example of the desk. For a physicist it is almost empty space: atoms in a particular order. However, for people drinking beer in a pub it is a defined piece of wood, a solid object. Rigorously speaking, things are not what they seem to be to the common sense. Sciences define what things are. If theories of science change, what things are, and this is valid from the tiniest parts to the whole universe, change accordingly. We do not know any \textit{permanent} physical reality. The material reality, or the substratum, might be the same the physical reality is not. The reason behind this distinction is that our concepts and the substratum are, presumably, very different things. Concepts are associated with our personal and social experiences. The substratum is not. Therefore, there is no a priori reason to think that the essence of material reality can be reproduced directly in the world of our laws and theories. How could we justify a supposedly isomorphism between two things of such a different nature? We cannot. And if there is not an isomorphism, the ideas of reflection and correspondence become not only obscure, but incomprehensible. This is why we adopt the point of view according to which our laws and theories allow us to introduce order in our experience. But the success in introducing order in our experience does not secure the correspondence between theories and the substratum, as manifested by the non trivial change of theories. The object of physics would not be then to make a mental or mathematical ``image'' reproducing material reality in its minute details, a one-to-one mapping, but \textit{to increase its utility for us by putting order in our experience}.\\

As far as the other principles are concerned, we reject them all. This takes us to the following situation. We have now the principle of realism in the broad sense, plus quantum mechanical formalism. Our task is to sketch a homogeneous methodology of physics based not in a philosophical erudite analysis, but looking carefully at how quantum theory really works, and drawing the pertinent conclusions from there. We have seen how it works in part 4 of this essay. The first thing we want to call the attention to is the fact that in part 4 the formalism was applied only to the isolated system  ${\left| \psi  \right\rangle} _{1,2,3,4} $, but neither to the sources, nor to the devices that were performing the measurements. The preparing and measuring devices are described in terms of the operations that the technicians, that have previously calibrated the apparatus, ``know how'' to do to get the results. This amounts to posit the inapplicability of quantum formalism to some part of the measurement process. To be more concrete, eighty-six years of research have shown that, due to the very existence of the quantum, it is not possible to \textit{control the reaction of the object on the measuring device}, as Bohr anticipated (Bohr, 1958). This fact splits the world in two parts: the quantum system to which the theory applies, and the rest of the world, in particular the measurements apparatus, to which the theory does not apply. This partition of the world into ``systems and observers'' with \textit{an arbitrary line}, was very unsatisfactory for many people, John Bell being one of them. For Bell ``the possibility of a homogeneous account of the world'' \textit{recovering the old principles}, was the chief motivation to study hidden variable theories (Bell, 1987. See also Jammer,1974, p. 253). Lack of space does not allow us to justify properly why we contemplate this understanding of the formalism with a division line as the one with the lesser conceptual difficulties. Loosely speaking it coincides with the Copenhagen interpretation of quantum theory as explained mainly, but not only, by Bohr \citep{Boh58a,Boh63a}, Heisenberg \citep{Hei58a} and Stapp \citep{Sta72a}, and also with the variant introduced by von Neumann, London, Bauer and Wigner. We will justify this choice in a separate paper.\\

Let us go back to our main objective. Our task was to elaborate a homogeneous methodology of modern physics according with realism in the broad sense and the new principles introduced by non relativistic quantum theory, without any other commitments. As discussed above, our approach is that these principles are applied only to the quantum phenomenon, not to the devices and, therefore, a division line between the subject and the object must be introduced. Consequently, the interchangeable subject cannot be altogether removed, conclusion that we had already reached by a different path. This is a strong indication that, to use a metaphor, in the same manner that the land of an island finishes at the mobile water edge, the construction of sciences finishes at the edge of our interventions, not at the edge of the essences, of the thing in itself. Physics is compatible with the postulation of the thing in itself, but it does not need this postulate to make sense. The thing in itself can be understood as a desired referent, but definitely it is not what sciences reach. It is in this sense that there is no science of the essences, or science in our absence: the construction of sciences finishes at the level of human experience. This is for us one of the epistemological lessons to extract from quantum theory.\\

We summarize our arguments about the indispensable role of the subject in making science in a theorem:
\label{p20}
\begin{thm}
The events known by nobody, do not exist (for science).
\end{thm}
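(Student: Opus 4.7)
The plan is to read the theorem as a consequence of the operative methodology built up in the preceding sections, so that the proof consists in unpacking what ``to exist (for science)'' has been taken to mean and then showing that ``known by nobody'' falls outside of that meaning. The intended strategy is a contrapositive: I would show that any event admitted into the body of science must be tied to the division line between a quantum system and an observer's apparatus, and hence must be known by somebody; by contraposition, an event known by nobody cannot be so admitted.

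First I would fix terminology using the distinction drawn on page \pageref{p18} between \emph{material reality} (the Dirac substratum) and \emph{physical reality} (the social construction sustained by current laws and theories). An ``event'' in the scientific sense is whatever occupies a definite place in this construction, i.e.\ whatever can be referred to by the network of preparations, measurements and reports to which the formalism applies. On this reading, ``to exist for science'' is not a claim about the substratum but about inscription in that network, so the task reduces to showing that such inscription always presupposes a knower.

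Second, I would invoke the analysis of entanglement swapping to supply the nontrivial half of the argument. There it was shown that the correlations between the distant pair $(1{+}4)$ are not latent in any prior state: without the Bell measurement, the broadcasting of its outcome, and the post-selection, the reduced state is $\rho_{1,4}=\hat{I}/4$ and no CHSH violation can be witnessed. The swapping ``event'' is not merely hidden from us, it is simply absent from the scientific record; its creation is inseparable from its being registered by some subject. Since the argument of section 5 was general enough to subsume realism (restricted), locality (broad), individuation and determinism together, I would argue by extension that no scientific event can be detached from the subject that inscribes it, and conclude that an event known by nobody is ipso facto no event for science.

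The hardest step, I expect, will be calibrating the scope of ``known by nobody'' so that the theorem is neither trivial nor self-defeating: it must exclude the observer-independent essences of the traditional realist program while still permitting scientific talk about regions and epochs we do not directly probe, such as stellar interiors or the early universe. The load-bearing work is therefore to articulate ``known'' broadly enough to cover legitimate indirect inference through the shared web of theory and measurement, yet narrowly enough that a putative event posited to hold \emph{in our absence}, outside every such inferential link, drops out. Once that articulation is in place the statement follows almost by definition, which seems appropriate: the ``theorem'' is really an epistemological postulate of the operative methodology rather than a mathematical claim requiring computation.
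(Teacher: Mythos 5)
Your proposal is ``correct'' in the only sense available here, but it takes a very different route from the paper: the paper's entire proof of this theorem is the single word ``Obvious.'' The authors treat the statement not as something to be derived but as a summarizing postulate of the operative methodology they have just sketched -- essentially an axiom, with the real argumentative work already done informally in the preceding sections on entanglement swapping, the division line between system and apparatus, and the distinction between material and physical reality. Your reconstruction makes that implicit dependence explicit: you turn the one-word proof into a contrapositive argument (any event inscribed in science presupposes a knower on the observer side of the division line), and you correctly identify the load-bearing difficulty, namely calibrating ``known by nobody'' so that the theorem does not forbid scientific talk about unobserved regions and epochs. Notably, the paper itself concedes exactly this point, but only \emph{after} the proof, in the comment following the theorem: it appeals to von Weizs\"{a}cker's remark that ``about what is not observed we are still free to make suitable assumptions,'' which is the paper's informal answer to the calibration problem you raise. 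So your final self-diagnosis -- that this is an epistemological postulate rather than a demonstrable claim -- is precisely the stance the paper adopts by declaring the proof obvious; what your version buys is an explicit chain of reasoning and an honest flagging of where that chain is weakest, whereas the paper buys brevity at the cost of leaving the scope of ``known'' entirely to the surrounding prose.
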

\begin{proof}
Obvious.
\end{proof}

\begin{cor}
Unperformed experiments have no results.
\end{cor}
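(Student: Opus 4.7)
The plan is to deduce the corollary as a direct specialization of the theorem, once the word ``result'' is unpacked. First I would read an experimental result as an event in the sense of the theorem: something registered by a device and read off by a technician, i.e.\ something that can enter somebody's experience. With this reading, an unperformed experiment has by definition nobody who has produced, recorded or read any outcome, so no knower of any candidate result exists.

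Then I would feed this into the theorem. The theorem asserts that any event known by nobody does not exist (for science). The hypothetical ``result'' of a never-performed experiment is exactly such an event-known-by-nobody, and therefore does not exist. This is precisely the statement of the corollary, so no further machinery is required beyond the identification of the two vocabularies.

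The hard part is conceptual rather than logical: one has to ward off the Platonist reflex which would grant the unperformed experiment a definite outcome fixed by pre-existing EPR elements of reality, a ``result that nobody happened to measure''. This reflex is exactly what the rejection of realism in the restricted sense, argued from entanglement swapping in Section 5, has already eliminated: under the methodology adopted in the previous section, only the undefined Dirac substratum exists in our absence, and any definite ``result'' is \emph{created} in the act of measurement. Once this point is admitted, the would-be result of a never-performed experiment has no referent, and the corollary follows from the theorem as immediately as the author's one-word proof suggests.
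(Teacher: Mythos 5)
Your derivation is sound by the paper's own (admittedly informal) standards, but it is not the route the paper takes: the authors' entire proof of the corollary is the citation ``See \citet{Per78a}'', i.e.\ they do not deduce the statement from Theorem~1 at all, they delegate it to Peres's independent argument. Peres's point is quantitative rather than epistemic: if one assumes that every measurement that \emph{could} have been performed on a correlated pair already has a definite result, then certain sums of correlations of CHSH type are algebraically bounded by $2$, whereas the quantum formalism predicts, and experiment confirms, values up to $2\sqrt{2}$; hence the assumed pre-existing results cannot all exist. That argument does not presuppose the ``known by nobody'' premise of Theorem~1 --- it refutes pre-existing results directly from the formalism and the data --- so it would persuade even a reader who rejects the theorem. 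Your version, by contrast, makes the corollary a literal corollary: you identify a ``result'' with an event that somebody registers, observe that an unperformed experiment furnishes no registrant, and apply the theorem; and you correctly note that the realist escape (a definite but unknown result) has already been closed off by the rejection of restricted-sense realism via entanglement swapping in Section~5. What your route buys is economy and fidelity to the word ``corollary''; what the paper's citation buys is an argument whose force is independent of the anti-realist premise, which is rhetorically stronger in a paper whose whole purpose is to argue \emph{for} that premise rather than from it.
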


\begin{proof}
See \citet{Per78a}.
\end{proof}

A brief comment about the theorem seems to be now necessary. In cosmology the Universe is observed not as it is at this very moment, but as it was a few years ago, or 4 billions of years ago, and based on this information and general relativity, we make suitable assumptions avoiding contradictions. As the information changes, we modify accordingly \textit{what we can say} about it. This evokes the Berkeley's old question: has the tree fallen in the forest if nobody was there? From our point of view the answer was given some 40 years ago by von Weizsäcker: ``what is observed, certainly exist; about what is not observed we are still free to make suitable assumptions. This freedom is then used to avoid paradoxes'' \citep{Wei71a}.\\

We have gathered in the previous pages, by a close examination of our best theory, enough elements to advance in our purpose to conjecture an alternative methodology. Those elements are:

\begin{itemize}
\item The principle of realism in the broad sense.

\item The principles of quantum mechanics, as used in part 4${}^{th}$.

\item The mutability of laws and theories.

\item The conjecture that physical reality is indeed an objective social construction (explained in page \pageref{p17}).

\item The division line between macroscopic and microscopic levels, introduced in page \pageref{p18}.

\item The fact that observer-participancy matters. If the values of the physical magnitudes (or the correlations) are created in the measurement, as we have shown with ES, then our role and our choices deciding with measurement to take are relevant. (See page \pageref{p13} above).
\end{itemize}

A general methodology of science incorporating those elements could be succinctly conjectured as follows. Since the mists of time, humans have interacted with their environment (with material reality, whatever this would be), first with their hands, sense organs, ideas or beliefs and then with their technological devices. In this process, undifferentiated material reality was decomposed into pieces and composed once again differently. We began to hold pieces of matter, to manipulate them (bones and stones as weapons, for example), to make material operations with our surrounding material reality. It was this manipulation towards survival which, in the becoming of time and over many thousands of years gave rise, from that undifferentiated material reality, to pieces of matter, then to ``unconscious concepts'', and finally to something that we now regard as ``material objects''. Objects are, therefore, seen here as the result of an efficient interaction with the environment done with survival purposes.\\

It was the subsequent manipulation with these material objects what established recursive relations. These relations were used to build up new material objects and new relations, and so on. Undifferentiated material reality gave way to ``unconscious concepts'', then to physical objects ``as postulated entities which simplify our account of the flux of existence'' (Quine 1980), and finally to relations summarized in a certain algorithm able to reproduce all relations of that class. This is what we designate as a \textit{natural law}. From this ``operative mindset'', the scientific laws appear not as \textit{inalterable ``laws of nature}'', but as our own constructions: computational algorithms that allow to condensate and reproduce an enormous variety of relations between objects and between macroscopic devices. It was these everlasting interventions in a malleable nature what shaped our world, introduced order in material reality and \textit{mental evolution} in human beings. The most plausible concatenation seems to have been: unconscious concepts $\to $ objects $\to $ sign language $\to $ spoken language $\to $ writing $\to $ conscious mind $\to $ laws and theories. We do not need to invert this sequence to understand the origin of our theories. Unconscious concepts, objects, language, consciousness, laws and theories appear in this methodology as social constructions, submitted to change and evolution because each of them are products of the previous order that we had introduced in any concrete field of human activity. Science is seen here in continuity with all other human activities: a survival activity resting on preceding knowledge obtained depending on both, the material substratum and the particular state of evolution of our extended mind.\\

The described sequence, with ``unconscious concepts'' at the beginning and conscious minds almost at the end needs an explanation that we succinctly give now. We can introduce it with the following question: what came first, concepts or objects? Can we have ``objects'' before we have concepts? In our opinion, the answer should be an unqualified yes. Let us explain this point with a few examples. The antelope has the concept of ``lion''. It runs away as soon as it sees one close by. The same can be affirmed in the case of the bees. They have the concept of ``flower''. Both, the antelope and the bee, have the ``concepts'', but they are not conscious concepts \textit{in the same manner} as we have them\footnote{We are indebted to Prof. T. Calvo for calling our attention to this point.}. These are the ``unconscious concepts'' introduced above. If our approach is correct, the unconscious concepts predate objects and conscious minds: concepts do not require consciousness. Jaynes' theory, which we follow in this concrete point, contends that it is possible to conceive of humans with all the traits of learning, reason, language, and so on, but not consciousness \citep{Jay00a}.\\

Note that, in the previous scenario and as far as the material reality concerns, the order introduced is not arbitrary. Not everything goes. In fact, it is independent of our individual will. Matter has its own \textit{legality}, revealed by its \textit{resistance} to our operations, by the fact that some of the things we do work and others do not. Those that work show invariants, regularities that we express in laws. These empirical pressures are the sure indication that material reality has legality. Quantum theory reminds us that this legality has manifested itself only through \textit{our interventions}.\\

This approach allows understanding why the traditional methodology worked so well in the past. It was due to the fact that then the level of organization of human experience was \textit{incipient} and the division line between subject and object was unnecessary. In such a situation we could remove our interventions completely. While that happened, the traditional methodology was sufficient and the phenomena were explained as if it were given in our absence. The second epistemological lesson of quantum theory is that actually the process is just the contrary. It is only once we have objects, language, consciousness, laws and theories that we can \textit{give the jump} and \textit{contrive} how the world would be in our absence. But this \textit{re-construction }requires the introduction of a subject that has been not a passive observer, but \textit{an active agent trying to solve old problems with new methods and new problems with creativity and imagination}.\\

Let us finish our proposal addressing a point that, perhaps, some readers have now in mind. If the construction of sciences finishes at the level of human experience, what about the ``truth'' of our theories? In the macroscopic side of the line, the truth is what naïvely seems to be: the correspondence with the object. At this level, traditional methodology is enough. However, at the microscopic level, there is not correspondence at all and that methodology is silent: it has nothing to say.\\
\label{p23}
The concept of scientific \textit{truth without correspondence} related with the general methodology outlined here should satisfy two conditions previously introduced, namely: it must be useful in the microscopic and in the macroscopic realms and it must combine material legality with the fact that the scientific construction finishes at the edge of human experience. Although the thorough discussion of this subject would lead far beyond the limits of a single essay, we will briefly introduce it using a few intuitive examples covering both realms.\\

In this approach, the concept of truth is established by what it is called \textit{confluence}. A confluence is produced when, using different and independent processes, we reach inexorably the same result. Independent means that one of the processes could be overturned without the others being necessarily so. Confluence is then more than mere \textit{regularity}. Regularity is a necessary condition satisfied in every separate independent process. Each of them is repeatable and regular, but this is not sufficient to be a confluence. It requires the convergence of various regularities. For example, the determination of Planck's constant can be done based on many different research programs: the black body radiation; the photoelectric effect; the Franck-Hertz experiment; the X- Ray production; etc. When these determinations are done, it is found that, within experimental errors, the regular results obtained \textit{in each} independent process converge in the same value for \textit{h}. This confluence establishes the consensus inside the physicists' community, ``materializes'' the quantity \textit{h}, and therefore it is said that \textit{h }is a truth of physics. We could argue similarly in the case of the Avogadro constant, determined by more than 30 different and independent methods and many other cases. Note that, although we have taken two examples from physics that are constants, belonging to the micro and macroscopic realms respectively, the confluence process does not refer only to physical constants. Is the \textit{general manner} of establishing the concept of truth. For example, the movement of the Earth's continents relative to each other (Continental drift, Wegener 1912) can be considered today a scientific truth\footnote{Although previously anticipated it required the theory of plate tectonics, established only about 1960.} based on the confluence of independent and regular evidences like: identical fossils of plants and animals that can be found in Africa, South America, India and Australia; the distribution of the same kind of glacial sediments in all these regions; the distribution of equal species that, like the snails, could have not travel through the oceans; the same erosive model in different continents; the fact that the borders of the continents fit together; etc.\\

It is a historical fact in science that whenever the same result has been reached at least by three or four independent means, it has never been overturned. This implies that, as we extend human experience by the process succinctly described above, the number of confluences never decrease lying down the foundations of what it is understood as the \textit{scientific progress}. These confluences besides being permanent are \textit{objective}: is not in our hands to make them converge or not. That is, they cannot be manipulated by the desires of an individual or a community. They are more than intersubjectivity or consensus inside a scientific community. They are the indelible proof that matter can be organized by our activity according to its own resistance and the simplest explanation of why scientists are so successful at reaching consensus. Confluentism in physics seems to be able to do, for example, what R.\ Rorty said that it could not be done: to have an objective criterion of  truth putting aside at the same time the correspondence theory\footnote{``No interesting connection will ever be found between the concept of truth and the concept of justification\dots truth as the aim of inquiry\dots  is either empty or false''  \citep{Ror99a}, p. 37. }. It shows that really there is something epistemological special about the nature of scientific knowledge. It allows maintaining contact with the material reality partially undefined as well as holding that science is a social construction, an idea shared with pragmatists like Rorty and -with due qualifications- with the strong program of the Edinburg School \citep{Bar79a,BarBloHen96a}. What seem to have misled these authors is that this same procedure is used in all our (social) activities, like jurisprudence, psychology, history or medicine, to name but a few. However, the big difference, not considered by these authors in all its relevance, is that, in physics, an appeal to the resistance of matter, material legality, is realized. Nevertheless, this is all we need.\\

\section*{Acknowledgements}

We thank an anonymous referee for her helpful criticism in a previous version. We also acknowledge financial support from the Spanish Ministry of Science and Innovation under projects: MICINN-08-FIS-2008-00288 and MICINN-FIS2011-24885. MF also acknowledges a sabbatical leave from Oviedo University.

\end{document}